\newcommand{\qed}{\mbox{}\hspace*{\fill}\nolinebreak\mbox{$\rule{0.6em}{0.6em}$}
}
\definecolor{gray}{rgb}{0.5,0.5,0.5}
\newtheorem{theorem}{Theorem}[section]
\newtheorem{lemma}[theorem]{Lemma}
\newtheorem{claim}[theorem]{Claim}
\newtheorem{definition}{Definition}[section]
\newtheorem{proposition}[theorem]{Proposition}
\newenvironment{proof}{{\bf Proof:}}{$\qed$\par}
\title{Lower Bounds for Embedding into Distributions over Excluded Minor Graph
  Families
\footnote{An earlier version of this paper~\cite{cg:minor04} contained a small error
which we have fixed here.}
}
\author{Douglas E. Carroll
\thanks{
Douglas E. Carroll was at the department of Computer Science in UCLA when the
bulk of this research was conducted. Email:
{\tt dougecarroll@yahoo.com}}
\\University of California, Los Angeles
\and 
Ashish Goel
\thanks{
Departments of Management Science and Engineering  and (by courtesy)
Computer Science, Stanford University.
Email: {\tt ashishg@stanford.edu}. Research supported by grants from the
NSF and an Alfred P. Sloan faculty fellowship.}
\\Stanford University
}
\begin{document}

\maketitle

\begin{abstract}

It was shown recently by Fakcharoenphol {\em et al.}~\cite{FakcharRT03} 
that arbitrary finite metrics can be embedded into distributions over 
tree metrics with distortion $O(\log n)$. It is also known that this 
bound is tight since there are expander graphs which cannot be embedded 
into distributions over trees with better than $\Omega(\log n)$ distortion. 

We show that this same lower bound holds for embeddings into distributions
over any minor excluded family. Given a family of graphs $F$ which excludes
minor $M$ where $|M|=k$, we explicitly construct a family 
of graphs with treewidth-$(k+1)$ which cannot be embedded into a distribution 
over $F$ with better than $\Omega(\log n)$ distortion. Thus, while these 
minor excluded families of graphs are more expressive than
trees, they do not provide asymptotically better approximations in general.
An important corollary of this is that graphs of treewidth-$k$ cannot be
embedded into distributions over graphs of treewidth-$(k-3)$ with distortion
less than $\Omega(\log n)$.

We also extend a result of Alon {\em et al.}~\cite{AlonKPW95} by showing
that for any $k$, planar graphs cannot be embedded into distributions 
over treewidth-$k$ graphs with better than $\Omega(\log n)$ distortion.

\end{abstract}

\section{Introduction}

Many difficult problems can be approximated well when restricted to certain
classes of metrics~\cite{LinialLR95,Bartal98,Baker94}. Therefore, low distortion
embeddings into these restricted classes of metrics become very desirable. We
will refer to the original metric which we would like to embed as the {\em
  source metric} and the metric into which we would like to embed as the {\em
  target metric}.

Tree metrics form one such class of desirable target metrics in the sense
that many difficult problems become tractable when restricted to
tree metrics. However, they are not sufficiently expressive; i.e. it
has been shown that there are classes of metrics which cannot be
approximated well by trees. In particular, Rabinovich and Raz
\cite{RabinovichR98} have proved that graph metrics cannot be embedded
into trees with distortion better than $\mbox{girth}/3-1$.

Therefore, subsequent approaches have proposed the use of more expressive
classes of metrics. Alon {\em et al.} \cite{AlonKPW95} showed that any
n-point metric can be embedded with $ 2^{O(\sqrt{\log n \log \log n })}$
distortion into distributions over spanning trees. In so doing they
demonstrated that such probabilistic metrics are more expressive than tree
metrics. Bartal \cite{Bartal96} formally defined probabilistic embeddings 
and proposed using distributions over arbitrary
dominating tree metrics. He showed that any finite metric could be embedded
into distributions over trees with $O({\log}^2 n)$ distortion. He subsequently
improved this bound to $O(\log n \log \log n)$ \cite{Bartal98}.

This path culminated recently in the result of Fakcharoenphol {\em et al.}
\cite{FakcharRT03} in which they improved this bound to $O(\log n)$
distortion. This upper bound is known to be tight since there exist graphs
which cannot be embedded into such distributions with better than $\Omega(\log
n)$ distortion. This lower bound follows naturally from the fact that any
distribution over trees can be embedded into $\ell_1$ with constant
distortion, and the existence of expander graphs which cannot be embedded into
$\ell_1$ with distortion better than $\Omega(\log n)$. Surprisingly, Gupta
{\em et al.} \cite{GuptaNRS99} showed that this same bound is in fact achieved
by source graph metrics of treewidth-$2$. Their result is also implicit in the
work of Imase and Waxman~\cite{ImaseW91} where they establish a lower bound on
the competitive ratio of online Steiner trees.

It is plausible to hypothesize that there are more general and expressive
classes of target metrics. We explore using distributions over minor closed
families of graphs and show that asymptotically, they are no stronger than
distributions over trees. We show a $\Omega(\log n)$ lower bound on the distortion
even when the source metrics are graphs of low treewidth. More precisely,
for any minor $M$, where $|M|=k$, we exhibit a construction for an
infinite class of finite metrics of treewidth-$(k+1)$ for which any embedding
into distributions over families of graphs excluding $M$ achieves $\Omega(\log
n)$ distortion. A corollary of this fact is that treewidth-$k$ graphs cannot
be embedded into distributions over treewidth-$(k-3)$ graphs with distortion
less than $\Omega(\log n)$.

A weaker result can be inferred directly from Rao \cite{Rao99} who proved
that any minor closed family can be embedded into
$\ell_1$ with distortion $O(\sqrt{\log n})$. Consequently, the expanders
exhibited by Linial {\em et al.} \cite{LinialLR95} cannot be embedded into
distributions over minor closed families with distortion less than
$\Omega(\sqrt{\log n})$.

One can derive a lower bound of $\Omega(\log n)$ by combining the method
of Klein {\em et al.} \cite{KleinPR93} for decomposing minor excluded graphs
with Bartal's \cite{Bartal96} proof of the lower bound for probabilistic 
embeddings into trees. This bound also follows from the recent paper
of Rabinovich on the average distortion of embeddings into $\ell_1$.
However, we show this same
bound holds for embeddings of simple, low-treewidth source metrics.


We continue by exploring the difficulty of embedding planar graph metrics into
distributions over other minor excluded graph families. Alon {\em et al.}
showed that any embedding of a 2-dimensional grid into a distribution
over spanning subtrees must have distortion $\Omega(\log n)$. We show
that for any fixed $k$, the same lower bound holds for embeddings of
2-dimensional grids into distributions over dominating treewidth-$k$ graph
metrics.

Note that our $\Omega(\log n)$ lower bounds hide polynomial factors of $k$.

\subsection{Techniques}

We employ Yao's MiniMax principle to prove both lower bounds -- it suffices to
show that for some distribution over the edges of the source graph, any
embedding of the source graph into a dominating target graph has large
expected distortion. In both cases the expected distortion is shown to be
logarithmic in the number of vertices in the graph.

For the main result, we show that given any minor $M$, one can recursively
build a family of source graphs which guarantee a large expected distortion
when embedded into a graph which excludes $M$ as a minor. This structure is
detailed in section 3. While we use some elements from the construction of
Gupta~{\em et al.}~\cite{GuptaNRS99}, the main building block in our recursive
construction is a stretched clique rather than the simpler treewidth-2 graphs
used by Gupta~{\em et al.}, and hence, our proof is more involved. As a step
in our proof, we show that if graph $G$ does not contain $H$ as a minor, then
subdivisions of $H$ can not be embedded into $G$ with a small distortion.

To show the lower bound for embedding planar graphs into bounded tree-width
graphs, we use the notion of nice tree
decompositions~\cite{Bodlaender98,Kloks94} of bounded tree-width graphs.  We
show that for a range of set sizes, the subsets of a $2$-dimensional grid have
much larger separators than do their embeddings in the nice tree
decompositions; this suffices to prove an $\Omega(\log n)$ lower bound with a
little more work. This argument is presented in section 4.

An earlier version of this paper~\cite{cg:minor04} had a small error. We were
using a recursive construction for a graph where there were many disjoint
paths between a source-sink pair. However, they were not all shortest paths, a
fact that we overlooked. This version of the paper corrects that error. The
correction is relatively simple: the only substantive change is that we use
$K_{n,n}$ as opposed to $K_n$ in the recursive construction.

\section{Definitions and Preliminaries}

Given two metric spaces $(G,\nu)$ and $(H,\mu)$ and an embedding
$\Phi:G \rightarrow H$, we say that the {\em distortion} of the embedding is
$\|\Phi\|\cdot\|\Phi^{-1}\|$ where 
\begin{eqnarray*}
\|\Phi\| &=& \max_{x,y \in G} \frac{\mu(\Phi(x),\Phi(y))}{\nu(x,y)}, \\
\|\Phi^{-1}\| &=& \max_{x,y \in G} \frac{\nu(x,y)}{\mu(\Phi(x),\Phi(y))}
\end{eqnarray*}

and $(G,\nu)$ {\em $\alpha$-approximates} $(H,\mu)$ if the {\em distortion} is
no more than $\alpha$. We say that $\mu$ {\em dominates} $\nu$ if 
$\mu(\Phi(x),\Phi(y)) \geq \nu(x,y)$  $\forall x,y$.

\begin{definition}
Given a graph $G=(V_G,E_G)$, a tree $T=(V_T,E_T)$ and a 
collection $\{X_i | i \in V_T\}$ of subsets of $V_G$, then $(\{X_i\},T)$ is said
to be a \emph{tree decomposition} of G if 
\begin{enumerate}
\item   $\underset{i \in V_T} \bigcup X_i = V_G$,
\item   for each edge $e \in E_G$, there exists $i \in V_T$ such 
        that the endpoints of $e$ are in $X_i$, and
\item   for all $i,j,k \in V_T:$ if $j$ lies on the path from $i$ to $k$ in $T$,
        then $X_i \cap X_k \subseteq X_j$.
\end{enumerate}
The {\em width} of a tree decomposition is $\underset {i \in V_T} {\max} |X_i|-1$.
The {\em treewidth} of a graph $G$ is the minimum width over all tree 
decompositions of $G$.
\end{definition}

\begin{definition}
A tree decomposition $(\{X_i\},T=(V_T,E_T))$ of a graph $G=(V_G,E_G)$
is said to be a {\em nice} decomposition if:
\begin{enumerate}
\item   $T$ is a rooted binary tree
\item   if node $i \in V_T$ has two children $j_1,j_2$,
        then $X_{j_1}=X_{j_2}=X_i$
\item   if node $i \in V_T$ has one child $j$,
        then either $X_i \subset X_j$ and $|X_i|=|X_j|-1$ or
        $X_j \subset X_i$ and $|X_j|=|X_i|-1$
\item   if node $i \in V_T$ is a leaf, then $|X_i|=1$ 
\end{enumerate}
\end{definition}

\begin{proposition}
\label{prop:nice}
\cite{Bodlaender98,Kloks94}
If graph $G$ has a tree decomposition of width $k$, then $G$ has a
nice tree decomposition of width $k$.
\end{proposition}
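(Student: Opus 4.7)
My plan is to start with an arbitrary width-$k$ tree decomposition $(\{X_i\},T)$ of $G$ and convert it, in four stages, into a nice decomposition of the same width. Each stage produces a new decomposition from the previous one, and I would verify at each step that the three tree decomposition properties are preserved and that no bag ever grows larger than $k+1$.

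First, I would root $T$ at an arbitrary vertex $r$. Second, I would binarize: whenever an internal node $i$ has $c \geq 3$ children $j_1,\ldots,j_c$, I would replace the $c$ edges $(i,j_\ell)$ by a chain of $c-1$ new nodes, each a copy of $X_i$, so that every copy ends up with exactly two children (either two further copies or one copy and an original child). Since every inserted bag equals $X_i$, all three tree decomposition conditions remain satisfied and the width is unchanged.

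Third, I would walk along each remaining edge $(i,j)$ of the (now binary, rooted) tree and insert a chain of ``introduce'' and ``forget'' nodes between $i$ and $j$ so that consecutive bags on the chain differ by exactly one element. Explicitly, with $A = X_i \setminus X_j$ and $B = X_j \setminus X_i$, I first remove the elements of $A$ one at a time (each intermediate bag is a subset of $X_i$, so still of size $\leq k+1$) and then add the elements of $B$ one at a time (each intermediate bag is a subset of $X_j$). Condition 3 of the tree decomposition remains intact because every newly inserted bag is contained in $X_i \cup X_j$, and the subtree-containment property for any vertex $v \in V_G$ is easy to check from the two cases $v \in X_i \cap X_j$ and $v$ appearing on only one side. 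At join nodes (those with two children), I would first insert an extra copy of the join node's bag between it and each of its children, so that condition 2 of the nice decomposition holds before running the introduce/forget subdivision on the edges below.

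Finally, for each leaf $\ell$ of the current tree with $|X_\ell| = s > 1$, I would append below $\ell$ a chain of $s-1$ forget nodes that strip $X_\ell$ down to a single vertex; this only adds subsets of $X_\ell$ and so preserves both the width and the tree decomposition axioms. The main obstacle is really the bookkeeping in the third step: one has to argue that after subdividing an edge into a chain of introduce/forget bags, condition 3 of the tree decomposition still holds globally, which I would do by observing that for any $v \in V_G$ the set of nodes whose bag contains $v$ is still the union of the old such set with a contiguous piece of the newly inserted chain, hence still connected in the new tree. Combined with the trivial preservation under the other three stages, this yields a nice tree decomposition whose width equals that of the original, proving the proposition.
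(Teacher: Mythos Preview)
The paper does not supply its own proof of this proposition; it is quoted as a known fact with citations to \cite{Bodlaender98,Kloks94} and used as a black box. Your four-stage construction (root, binarize with duplicated bags, subdivide each parent--child edge into a chain of introduce/forget nodes, and pad leaves down to singletons) is exactly the standard argument given in those references, and it is correct; the only point you might make explicit is the trivial edge case where $X_i = X_j$ along some edge, which is handled by contracting the edge before subdividing.
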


The following property follows directly from the definition of a tree
decomposition:

\begin{proposition}
\label{prop:path}
For all $i,j,k \in V_T:$ if $j$ lies on the path from $i$ to $k$ in $T$, 
and $x_1 \in X_i$ and $x_2 \in X_k$ then any path connecting $x_1$ 
and $x_2$ in $G$ must contain a node in $X_j$.
\end{proposition}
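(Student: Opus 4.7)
The plan is to use the tree decomposition properties to show that $X_j$ is a vertex separator in $G$ between any vertex appearing (exclusively) in bags on the $i$-side of $j$ and any vertex appearing (exclusively) in bags on the $k$-side of $j$. The trivial cases $j=i$ or $j=k$ reduce immediately (since then $x_1$ or $x_2$ already lies in $X_j$), so I restrict attention to the case when $j$ is strictly interior on the $i$--$k$ path in $T$.

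First I would delete $j$ from $T$, which breaks $T$ into a forest of subtrees; call $T_i$ the subtree containing $i$ and $T_k$ the subtree containing $k$ (these are distinct, since $j$ separates them). Let
\[
V_i \;=\; \Bigl(\bigcup_{l\in V(T_i)} X_l\Bigr)\setminus X_j,
\qquad
V_k \;=\; \Bigl(\bigcup_{l\in V(T_k)} X_l\Bigr)\setminus X_j.
\]
The key step is to invoke property (3) of the definition of tree decomposition (the ``connected subtree'' property): for every vertex $v\in V_G$, the set of bags containing $v$ is connected in $T$. Thus if $v\in V_i$, every bag containing $v$ lies in $T_i$, because otherwise the path in $T$ from the bag in $T_i$ containing $v$ to any bag on the other side of $j$ would pass through $j$, forcing $v\in X_j$ and contradicting $v\notin X_j$. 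The same argument applies to $V_k$, which immediately gives $V_i\cap V_k=\emptyset$.

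Next I would show there is no edge of $G$ between $V_i$ and $V_k$. Suppose $(u,v)\in E_G$ with $u\in V_i$ and $v\in V_k$. By property (2), some bag $X_l$ contains both $u$ and $v$. But by the previous paragraph, every bag containing $u$ lies in $T_i$, and every bag containing $v$ lies in $T_k$, while $T_i$ and $T_k$ are disjoint subtrees. Contradiction. Hence the vertex sets $V_i$, $V_k$, and $X_j$ partition all of $V_G$ that is reachable in $G$ in such a way that $V_i$ and $V_k$ are not connected by any edge avoiding $X_j$.

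Finally, I would finish by applying this separation to $x_1$ and $x_2$. If $x_1\in X_j$ we are done; else $x_1\in X_i\setminus X_j\subseteq V_i$, and similarly either $x_2\in X_j$ or $x_2\in V_k$. In the remaining case $x_1\in V_i$ and $x_2\in V_k$, any $x_1$--$x_2$ path in $G$ must use an edge crossing from $V_i\cup X_j$ to $V_k\cup X_j$, and since there are no edges directly between $V_i$ and $V_k$, such a path must contain a vertex of $X_j$. There is no real obstacle here; the only delicate point is the appeal to the connected-subtree property to rule out a vertex appearing on both sides of $j$ without appearing in $X_j$, which is exactly what property (3) of the tree decomposition supplies.
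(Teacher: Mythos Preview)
Your argument is correct and is the standard proof of this folklore separator property of tree decompositions. The paper itself gives no proof at all: it simply states the proposition as following ``directly from the definition of a tree decomposition,'' so there is nothing to compare against.

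One small wording issue: when $j$ has degree larger than two in $T$, removing $j$ produces subtrees other than $T_i$ and $T_k$, so $V_i$, $V_k$, and $X_j$ need not cover all of $V_G$, and a priori an $x_1$--$x_2$ path might wander through those other pieces rather than cross directly from $V_i\cup X_j$ into $V_k\cup X_j$. Your edge argument, however, already proves the stronger fact you need: every neighbor of a vertex in $V_i$ lies in $V_i\cup X_j$ (since any bag containing a vertex of $V_i$ lies entirely in $T_i$). That immediately forces any path leaving $V_i$ to enter $X_j$ first, which closes the argument cleanly without needing $V_i$, $V_k$, $X_j$ to exhaust the vertex set.
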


For a more complete exposition of treewidth see \cite{Kloks94,Bodlaender98}.

\begin{definition}
Given a graph $H$, a {\em $k$-subdivision} of $H$ is defined as the
graph resulting from the replacement of each edge by a path of length
$k$.
\end{definition}

We will use $G_{n,m}$ to denote the $m$-subdivision of $K_{n,n}$.
We observe that $treewidth(G_{n,m}) = n =treewidth(K_{n,n})$.

\section{Distributions over Minor Excluded Families}

\subsection{Outline}

Given a family $F$ which excludes minor $M$ where $|M|=k$, we recursively
construct a family of graphs $H_i$ which embed into any graph in $F$ with 
average distortion $\Omega(\log n)$ where $n=|H_i|$. Then by Yao's MiniMax
principle, any embedding into a distribution over $F$ must have distortion
$\Omega(\log n)$ as claimed.

\subsection{Results}

Before proceeding with our main result, we need the following technical
lemma\footnote{ This lemma appears to be folklore. But we have not been able
  to find a proof in the literature. Our proof is non-trivial and might be
  useful in other contexts, so we have provided the proof in this paper.}.
\begin{lemma}
  \label{lem:minor}
  Let $G,H$ be graphs such that $G$ does not contain $H$ as a minor and let
  $J$ be the $k$-subdivision of $H$. If $f$ is an embedding of the metric of
  $J$ into the metric of $G$, then $f$ has distortion at least $k/6-1/2$.
\end{lemma}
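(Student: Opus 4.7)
\begin{proofsketch}
I argue the contrapositive: assuming $f: V(J) \to V(G)$ has distortion $D < k/6 - 1/2$, I will build a model of $H$ as a minor inside $G$, contradicting the hypothesis. After a harmless rescaling I may assume there is a constant $\alpha > 0$ with $\alpha\, d_J(x,y) \leq d_G(f(x),f(y)) \leq \alpha D\, d_J(x,y)$ for all $x,y \in V(J)$.

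\emph{Branch-vertex balls.} For each $v \in V(H)$ let $B_v$ be the ball in $G$ of some integer radius $R$ around $f(v)$, with $R$ to be pinned down below. Since $d_G(f(u),f(v)) \geq \alpha k$ whenever $u \neq v$, any $R < \alpha k/2$ makes the $B_v$'s pairwise vertex-disjoint, and each $B_v$ is connected because balls in graph metrics are closed under shortest paths back to the centre.

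\emph{Connecting walks.} For each edge $uv \in E(H)$, write the subdivision path as $u = p_0, p_1, \ldots, p_k = v$ and construct a walk $W_{uv}$ in $G$ by concatenating shortest $G$-paths between consecutive $f(p_i)$ and $f(p_{i+1})$. Each segment has length at most $\alpha D$, so $W_{uv}$ has length at most $\alpha D k$. A key technical fact is that each $W_{uv}$ lies in $G \setminus \bigcup_{w \notin \{u,v\}} B_w$: any $y$ on $W_{uv}$ is within $\alpha D/2$ of some $f(p_i)$, and because every $J$-path from an internal $p_i$ to some $w \notin \{u,v\}$ must exit the subdivision path through $u$ or $v$ (each at $J$-distance $\geq k$ from $w$), one has $d_J(p_i, w) \geq k$ and hence $d_G(y, f(w)) \geq \alpha(k - D/2)$. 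Choosing $R$ strictly less than $\alpha(k - D/2)$ keeps $y$ outside every foreign ball.

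\emph{Minor assembly and main obstacle.} I enlarge each $B_v$ to a branch set $M_v$ by attaching to it the $v$-side portion of each incident walk $W_{uv}$, using walk-distance from $f(v)$ as a tie-breaker; this arranges the $u$-portion and $v$-portion of $W_{uv}$ to meet at a single edge of $G$, which then supplies the required $M_u$--$M_v$ edge, and each $M_v$ stays connected by construction. The principal obstacle is to verify that the $M_v$'s are pairwise vertex-disjoint despite possible collisions between walks corresponding to different edges. For two walks $W_{e_1}, W_{e_2}$ whose edges share no $H$-endpoint, the two anchors $p_i^{e_1}, p_j^{e_2}$ closest to a collision vertex would be at $J$-distance $\geq k$ yet at $G$-distance $\leq \alpha D$, forcing $k \leq D$, a contradiction. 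For walks that share an endpoint $w$, the same calculation pushes any outside-ball collision vertex close to $f(w)$ and therefore into $B_w$, so the collision is absorbed into $M_w$ rather than leaking into another $M_v$. The precise constant $k/6 - 1/2$ arises from balancing the three simultaneous constraints on $R$: ball disjointness ($R < \alpha k/2$), walk avoidance of foreign balls ($R < \alpha(k - D/2)$), and absorption of shared-endpoint collisions into the shared ball. Once disjointness and the required $M_u$--$M_v$ edges are established, the collection $(M_v)_{v \in V(H)}$ realises $H$ as a minor of $G$, the desired contradiction.
\end{proofsketch}
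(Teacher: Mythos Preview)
Your route is genuinely different from the paper's. The paper works in the continuous linear-extension framework of Rabinovich--Raz: it extends $J,G$ to metric $1$-complexes $\tilde J,\tilde G$ and $f$ to a continuous $\tilde f$, and proves the key claim that whenever $\tilde f(x)$ and $\tilde f(y)$ lie on the same line of $\tilde G$, any $x'$ on the line of $x$ and $y'$ on the line of $y$ satisfy $\tilde d_J(x',y')\le 2\alpha+1$. Using this, for each $H$-edge $(u,v)$ the paper isolates a single ``representative'' $G$-edge on the image of the $u$--$v$ subdivision path, near the image of its midpoint; the claim guarantees this edge carries no image of any point off that $u$--$v$ path. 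Contracting the remaining edges of each representative path then exhibits $H$ as a minor of $G$, and the constant $k/6-1/2$ drops out of the single inequality $k/2-\alpha-1/2>2\alpha+1$.

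Your discrete balls-and-walks construction avoids the continuous machinery, but the shared-endpoint collision step does not follow from ``the same calculation,'' and this is a genuine gap. For $W_{wu}$ and $W_{wv}$, the nearest anchors $p_i$ (at $J$-distance $i$ from $w$) and $q_j$ (at $J$-distance $j$ from $w$) satisfy $d_J(p_i,q_j)=i+j$ whenever $i+j$ is not too large, so a collision at $y$ only forces $i+j\le D$; then $d_G(y,f(w))\le \alpha D/2+\alpha D\min(i,j)\le \alpha D(1+D)/2$, which for $D\approx k/6$ is of order $\alpha k^2$, far exceeding any admissible $R<\alpha k/2$. Hence $y$ need not lie in $B_w$. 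Even when it does, your walk-midpoint rule may still place $y$ in the $u$-portion of $W_{wu}$ and the $v$-portion of $W_{wv}$ simultaneously, so ``absorption into $M_w$'' does not by itself make $M_u$ and $M_v$ disjoint; trimming the walk portions at ball boundaries would fix that but then breaks the connectivity argument for $M_v$. A direct argument along these lines can be pushed through, but it requires a more careful assignment rule (and there is no reason it should reproduce exactly the constant $k/6-1/2$, which you assert but never derive).
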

  \newcommand{\wa}{w_{\lfloor k/2\rfloor}}
  \newcommand{\wb}{w_{\lfloor k/2\rfloor+1}}
\begin{proof}
  
  We will use the linear extension framework of Rabinovich and
  Raz~\cite{RabinovichR98}. While we borrow some of their techniques, we have
  not found a reduction that would use their lower bounds (on the distortion
  of an embedding of a graph into another with a smaller Euler characteristic)
  as a black-box to prove this technical lemma.
  
  The linear extension $\tilde{Q}$ of a graph $Q$ is obtained by identifying
  with each edge of $Q$ a line of length 1. All the points on all the lines
  belong to $\tilde{Q}$. The metric $d_Q$ can be extended to the metric
  $\tilde{d}_Q$ in the following natural fashion.  If $x$ and $y$ lie on the
  same line $(u,v)$ then $\tilde{d}_Q(x,y)$ is merely their distance on the
  line.  If $x$ lies on the line $(u,v)$ and $y$ lies on a different line
  $(w,r)$, then
  \begin{eqnarray*}
    \tilde{d}_Q(x,y) = \min\{&&\tilde{d}_Q(x,u) + d_Q(u,w) + \tilde{d}_Q(w,y), \\
    && \tilde{d}_Q(x,v) + d_Q(v,w) + \tilde{d}_Q(w,y),\\
    && \tilde{d}_Q(x,u) + d_Q(u,r) + \tilde{d}_Q(r,y), \\
    &&\tilde{d}_Q(x,v) + d_Q(v,r) + \tilde{d}_Q(r,y)\}
  \end{eqnarray*}
  We will refer to the original vertices as vertices of $\tilde{Q}$ and to the
  original edges of $Q$ as the lines of $\tilde{Q}$.
  We can now extend the embedding $f$ into a continuous map
  $\tilde{f}:\tilde{J}\rightarrow\tilde{G}$ such that 
  \begin{enumerate}
  \item $\tilde{f}$ agrees with
  $f$ on vertices of $\tilde{J}$, and
\item if $x\in (u,v)$ where $(u,v)$ is a line of $\tilde{J}$ then
  $\tilde{f}(x)$ lies on a shortest path from $\tilde{f}(u)$ to $\tilde{f}(v)$
  in $\tilde{G}$ such that $\tilde{d}_J(x,u)/\tilde{d}_J(x,v) =
  \tilde{d}_G(\tilde{f}(x),\tilde{f}(u))
  /\tilde{d}_G(\tilde{f}(x),\tilde{f}(v))$.
  \end{enumerate}
  
  Since $\tilde{f}$ is continuous, the entire line $(u,v)$ in $\tilde{J}$ must
  be mapped to a single shortest path from $\tilde{f}(u)$ to $\tilde{f}(v)$ in
  $\tilde{G}$. We will now assume that the distortion $\alpha$ of $f$ is less
  than $k/6-1/2$ and derive a contradiction. But first, we need to state and
  prove the following useful claim:

  \begin{claim}
    If the points $\tilde{f}(x)$ and $\tilde{f}(y)$ lie on the same line in
    $\tilde{G}$, the points $x,x'$ lie on the same line in $\tilde{J}$, and
    the points $y,y'$ lie on the same line in $\tilde{J}$, then
    $\tilde{d}_J(x',y') \le 2\alpha+1$.
  \end{claim}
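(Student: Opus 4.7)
\vspace{2mm}
\noindent\emph{Plan.} The plan is to establish the slightly stronger inequality $\tilde d_J(x',y') \le \alpha + 2$, which implies the claim since any distortion satisfies $\alpha\ge 1$. First I would rescale so that $f$ is non-contracting on vertices of $J$: $d_J(w_1,w_2) \le d_G(f(w_1),f(w_2)) \le \alpha\, d_J(w_1,w_2)$ for every pair of vertices $w_1,w_2$ of $J$.  Let $(u,v)$ and $(p,q)$ denote the lines of $\tilde J$ containing the pairs $(x,x')$ and $(y,y')$, respectively.  By condition~2 in the construction of $\tilde f$, the line $(u,v)$ is mapped onto a shortest path in $\tilde G$ from $f(u)$ to $f(v)$ of length $d_G(f(u),f(v))\le\alpha$ that passes through $\tilde f(x)$, and the analogous statement holds for $(p,q)$ and $\tilde f(y)$.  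Because prefixes of shortest paths are themselves shortest paths, I will use the identity $\tilde d_G(f(u),\tilde f(x)) + \tilde d_G(\tilde f(x),f(v)) = d_G(f(u),f(v))$, and the analogous one for $(p,q)$.

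The main step is to route a $\tilde d_J$-path from $x'$ to $y'$ through a pair of vertex endpoints.  For any $w_1\in\{u,v\}$ and $w_2\in\{p,q\}$, the triangle inequality in $\tilde J$ yields
\[
\tilde d_J(x',y') \;\le\; \tilde d_J(x',w_1) + d_J(w_1,w_2) + \tilde d_J(w_2,y').
\]
I would then use non-contraction of $f$ to bound $d_J(w_1,w_2) \le d_G(f(w_1),f(w_2))$, and a second triangle inequality in $\tilde G$, using the hypothesis $\tilde d_G(\tilde f(x),\tilde f(y)) \le 1$, to obtain
\[
d_G(f(w_1),f(w_2)) \;\le\; \tilde d_G(f(w_1),\tilde f(x)) + 1 + \tilde d_G(\tilde f(y), f(w_2)).
\]
The combined bound splits cleanly as $A(w_1) + 1 + B(w_2)$, where $A(w_1) := \tilde d_J(x',w_1) + \tilde d_G(f(w_1),\tilde f(x))$ and $B(w_2) := \tilde d_G(\tilde f(y), f(w_2)) + \tilde d_J(w_2,y')$ depend on the two choices independently.

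To finish, I would run a one-line averaging argument on each bracket.  Summing $A(w_1)$ over $w_1\in\{u,v\}$, the $\tilde d_J$-contributions total $1$ (since $x'$ lies on the length-$1$ line $(u,v)$) and the $\tilde d_G$-contributions total $d_G(f(u),f(v))\le\alpha$ (since $\tilde f(x)$ lies on a shortest $f(u)$--$f(v)$ path).  Hence $\min_{w_1} A(w_1) \le (1+\alpha)/2$, and symmetrically $\min_{w_2} B(w_2) \le (1+\alpha)/2$.  Choosing these minimizers independently yields
\[
\tilde d_J(x',y') \;\le\; \frac{1+\alpha}{2} + 1 + \frac{1+\alpha}{2} \;=\; \alpha+2 \;\le\; 2\alpha+1.
\]
The main point to be careful about is restricting non-contraction to \emph{vertex} distances $d_J(w_1,w_2)$; the routing through vertex endpoints $w_1,w_2$ is designed precisely so that I never need to assert that $\tilde f$ is non-contracting on interior points of the extensions, which is not obvious from the definition.
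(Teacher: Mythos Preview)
Your route is sound and does prove the bound $2\alpha+1$, but the ``rescale so that $f$ is non-contracting'' step is not quite innocuous. Rescaling $d_G$ by $c=\|\Phi^{-1}\|$ also rescales the lines of $\tilde G$: two points on the same line of the rescaled $\tilde G$ are at distance at most $c$, not $1$, so the middle ``$+1$'' in your triangle inequality must become ``$+c$''. Your averaging then yields
\[
\tilde d_J(x',y')\;\le\;\frac{1+\alpha}{2}+c+\frac{1+\alpha}{2}\;=\;\alpha+1+\|\Phi^{-1}\|
\]
rather than $\alpha+2$. The claim still follows, because for unweighted graph metrics any finite-distortion embedding is injective, so adjacent $J$-vertices map to distinct $G$-vertices at $G$-distance $\ge 1$; hence $\|\Phi\|\ge 1$ and $\|\Phi^{-1}\|=\alpha/\|\Phi\|\le\alpha$, giving $\tilde d_J(x',y')\le 2\alpha+1$. (The sharper $\alpha+2$ need not hold in general.) An easy fix is simply not to rescale at all and track $\|\Phi^{-1}\|$ explicitly as above.

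Modulo this wrinkle, your approach is genuinely different from the paper's. The paper sums all four cross-distances $d_G(f(p),f(r))+d_G(f(p),f(s))+d_G(f(q),f(r))+d_G(f(q),f(s))$ and compares them to the two edge-distances $d_G(f(p),f(q))+d_G(f(r),f(s))$; bounding this \emph{ratio} by $4$ (using $X\le 1$ and $Y\ge 2$) and then passing to $J$-distances costs only a factor of $\alpha$, and being a ratio the argument is scale-invariant, so no normalization of $f$ is needed. Your argument---pick one endpoint on each side, route through it, and average over the two choices to get the separable form $A(w_1)+1+B(w_2)$---is more direct and arguably cleaner, at the cost of needing the extra observation $\|\Phi\|\ge 1$ to close the loop.
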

  \begin{proof}
    Suppose $x,x'$ and $y,y'$ lie on lines $(p,q)$ and $(r,s)$ in $\tilde{J}$,
    respectively. Use $X$ to denote the quantity
    $\tilde{d}_G(\tilde{f}(x),\tilde{f}(y))$ and $Y$ to denote the quantity
    $\tilde{d}_G(\tilde{f}(p), \tilde{f}(x)) + \tilde{d}_G(\tilde{f}(x),
    \tilde{f}(q)) + \tilde{d}_G(\tilde{f}(r), \tilde{f}(y)) +
    \tilde{d}_G(\tilde{f}(y), \tilde{f}(s))$.  Since $\tilde{f}(x)$ and
    $\tilde{f}(y)$ lie on the shortest paths from $\tilde{f}(p)$ to
    $\tilde{f}(q)$ and from $\tilde{f}(r)$ to $\tilde{f}(s)$, respectively, we
    have $\tilde{d}_G(\tilde{f}(p), \tilde{f}(q)) + \tilde{d}_G(\tilde{f}(r),
    \tilde{f}(s)) = Y$. Also, by triangle inequality, we have
    $\tilde{d}_G(\tilde{f}(p), \tilde{f}(r)) + \tilde{d}_G(\tilde{f}(p),
    \tilde{f}(s)) + \tilde{d}_G(\tilde{f}(q), \tilde{f}(r)) +
    \tilde{d}_G(\tilde{f}(q), \tilde{f}(s)) \le 4X + 2Y$. Clearly, $X \le 1$
    and $Y \ge 2$, so we have
    $$
    \frac{
      \tilde{d}_G(\tilde{f}(p), \tilde{f}(r)) + \tilde{d}_G(\tilde{f}(p),
      \tilde{f}(s)) + \tilde{d}_G(\tilde{f}(q), \tilde{f}(r)) +
      \tilde{d}_G(\tilde{f}(q), \tilde{f}(s))
      }
    {
      \tilde{d}_G(\tilde{f}(p), \tilde{f}(q)) + \tilde{d}_G(\tilde{f}(r),
      \tilde{f}(s))
      } \le 4.
    $$
    Since the distortion is $\alpha$, we must have
    $$
    \frac{
      \tilde{d}_J(p,r) +
      \tilde{d}_J(p,s) + \tilde{d}_J(q,r) + \tilde{d}_J(q,s)
      }
    {
      \tilde{d}_J(p,q) + \tilde{d}_J(r,s)
      }
    \le 4\alpha.
    $$
    
    But $\tilde{d}_J(p,q) = \tilde{d}_J(r,s) = 1$. Also, $\tilde{d}_J(p,r) +
    \tilde{d}_J(p,s) + \tilde{d}_J(q,r) + \tilde{d}_J(q,s) \ge
    4 \tilde{d}_J(x',y') - 4$.
    Hence, we have $(4 \tilde{d}_J(x',y') - 4)/2 \le 4\alpha$, or
    $\tilde{d}_J(x',y') \le 2\alpha + 1.$
  \end{proof}
  
  We will now continue with the proof of Lemma~\ref{lem:minor}.  For any edge
  $(u,v)\in H$, consider the $u$-$v$ path of length $k$ in $\tilde J$.
  Consider the image of this path in $\tilde G$ under $\tilde f$. The image
  need not be a simple path, but must contain a simple path from $\tilde f(u)$
  to $\tilde f(v)$. We choose any such simple path arbitrarily, and call it
  the representative path for $(u,v)$, and denote it by $P(u,v)$. Start
  traversing this path from $\tilde f(u)$ to$\tilde f(v)$ and stop at the
  first vertex $q$ which is the image of a point $x$ on the $u$-$v$ path in
  $\tilde J$ such that $\tilde d_J(u,x) \geq k/2-\alpha-1/2$. Let $(p,q)$ be
  the last edge traversed. We will call this the representative line of
  $(u,v)$ and denote it by $L(u,v)$. Consider a point $y$ on the $u$-$v$ path
  in $\tilde J$ that maps to $p$. The choice of $p$ and $q$ implies that
  $\tilde d_J(u,y) < k/2-\alpha-1/2$, and hence, we can apply claim 1 to
  conclude that $ k/2-\alpha-1/2 \le d_J(u,x) \le k/2+\alpha+1/2$. Now, for
  any point $z$ not on the $u$-$v$ path in $\tilde J$, we have $\tilde
  d_J(x,z) \ge k/2 - \alpha -1/2$. Since we assumed $\alpha < k/6 -1/2$, we
  have $\tilde d_J(x,z) > 2\alpha + 1$ and hence, $\tilde f(z)$ can not lie on
  the line $L(u,v)$.
  
  Thus the representative line $L(u,v)$ has two nice properties: it lies on a
  simple path from $\tilde f(u)$ to $\tilde f(v)$ and does not contain the
  image of any point not on the $u$-$v$ path in $\tilde J$.
  
  Now, we perform the following simple process in the graph $G$: For every
  edge $(u,v)$ of $H$, contract all the edges on the representative path
  $P(u,v)$ in $G$ except the representative edge $L(u,v)$.  Since the
  representative line of an edge $(u,v)$ of $H$ does not intersect the
  representative path of any other edge of $H$, no representative edge gets
  contracted. The resulting graph is a minor of $G$, but also contains $H$ as
  a minor, which is a contradiction. Hence $\alpha \geq k/6 - 1/2$.

\end{proof}

Now we can prove our main result:
\begin{theorem}
\label{thm:minor}
Let $F$ be a minor closed family of graphs which excludes minor $M$ where
$|V_M| = n$. There exists an infinite family of graphs ${H_i}$ with
treewidth-$(n+1)$ such that any $\alpha-$approximation of the metric of $H_i$
by a distribution over dominating graph metrics in $F$ has
$\alpha=\Omega(\log |H_i|)$.
\end{theorem}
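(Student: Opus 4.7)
The plan is to apply Yao's MiniMax principle: it suffices to construct treewidth-$(n+1)$ graphs $\{H_i\}$, together with a fixed distribution $\mathcal{D}_i$ on their edges, such that for every dominating $G\in F$ and every embedding $f\colon V(H_i)\to V(G)$, the expected stretch $\expect_{e\sim\mathcal{D}_i}[d_G(f(e))/d_{H_i}(e)]$ is $\Omega(\log|H_i|)$. I will build $H_i$ recursively, using $G_{n,m}$ (the $m$-subdivision of $K_{n,n}$, for a suitably large fixed constant $m$) as the basic ``stretched clique'' building block, in the spirit of the Gupta~\etal{} construction for treewidth-$2$. Let $H_0$ be a single edge with terminals $s_0,t_0$; given $H_i$ with terminals $s_i,t_i$ and $d_i=d_{H_i}(s_i,t_i)$, form $H_{i+1}$ by taking $G_{n,m}$, designating $s_{i+1},t_{i+1}$ on opposite sides of the underlying $K_{n,n}$, and replacing each edge of $G_{n,m}$ by a fresh copy of $H_i$ (identifying $s_i,t_i$ with the endpoints of the replaced edge). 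Since $\mathrm{tw}(K_{n,n})=n$ and edge-substitution with a $2$-terminal graph preserves treewidth up to a small additive constant, a standard induction gives $\mathrm{tw}(H_i)\le n+1$; and since $|H_{i+1}|=\Theta(n^2 m\cdot|H_i|)$, the recursion depth is $\Theta(\log|H_i|)$.

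The driver of the lower bound is Lemma~\ref{lem:minor}. Since $|V_M|=n$, $M$ is a subgraph of $K_n$, and contracting a perfect matching in $K_{n,n}$ yields $K_n$, so $M$ is a minor of $K_{n,n}$ and every $G\in F$ excludes $K_{n,n}$ as well. Fix any dominating embedding $f$ and consider an arbitrary copy of a $G_{n,m}$-skeleton appearing at some level of the recursive construction of $H_i$. Because distances among skeleton vertices inside $H_i$ equal $d_j\cdot d_{G_{n,m}}(\cdot,\cdot)$ for the appropriate sub-level $j$, the restriction of $f$ to skeleton vertices is a dominating embedding of $G_{n,m}$ up to a uniform scaling, and Lemma~\ref{lem:minor} supplies a pair of skeleton vertices $u,v$ whose image distance in $G$ is at least $(m/6-1/2)\cdot d_j\cdot d_{G_{n,m}}(u,v)$. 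Telescoping along a shortest $u$-$v$ path in the skeleton, the super-edges on that path suffer an average stretch of at least $m/6-1/2$.

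I take $\mathcal{D}_i$ to pick a uniformly random recursion level and, within that level, a uniformly random $G_{n,m}$-skeleton and then a uniformly random super-edge in it, descending to a base-level edge. The per-level guarantee above contributes $\Omega(1)$ in expectation per level after normalizing by the $O(n^2 m)$ super-edges in each skeleton (a constant depending only on $|M|$); summing over the $\Theta(\log|H_i|)$ levels then yields the claimed bound. The main obstacle is exactly this passage from Lemma~\ref{lem:minor}'s \emph{maximum}-stretch guarantee to an \emph{average}-stretch bound against a distribution fixed in advance of $G$: dominance is what saves us, since it forces the excess stretch guaranteed by Lemma~\ref{lem:minor} to be spread along the shortest skeleton path between the violating pair, so that uniform super-edge sampling picks up a positive contribution at every level, and the contributions from disjoint sub-problems at different recursion levels aggregate rather than cancel.
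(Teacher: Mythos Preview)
There is a genuine gap in the passage from super-edge stretch to base-edge stretch, and it stems from your choice of building block. In your $H_{i+1}$, the terminals $s_{i+1},t_{i+1}$ sit on opposite sides of the underlying $K_{n,n}$, so in $G_{n,m}$ they are joined by a \emph{single} shortest path of length $m$, using only $m$ of the $n^2 m$ edges. Recursively, the terminals of a level-$(j-1)$ block are joined by a unique shortest path of $m^{j-1}$ base edges out of the $(n^2 m)^{j-1}$ base edges in that block. Now, your level-$j$ guarantee from Lemma~\ref{lem:minor} concerns the stretch of level-$(j-1)$ \emph{super}-edges, i.e., of vertex pairs at distance $d_{j-1}$. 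To convert this to a statement about base edges you must telescope along a shortest terminal path inside each super-edge, which touches only an $n^{-2(j-1)}$ fraction of its base edges. So the contribution of level $j$ to the uniform average base-edge stretch is $O(n^{-2(j-1)})$, and summing over levels gives $O_n(1)$, not $\Omega(i)$. Equivalently, if you try to chain the levels via $A_{j-1}:=\text{average level-}(j-1)\text{ block terminal stretch}$, the best recursion you can extract is $A_{j-1}-1\ge (A_j-1)/n^2+c_n$, which converges to a constant.

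The paper's construction is designed precisely to avoid this loss. Its $H_1$ is not $G_{n,m}$ but rather $K_{n,n}$ augmented with $n$ length-$n$ ``arms'' from a source $s$ to each left vertex and from each right vertex to a sink $t$. The point of the arms is that the $2n^3+n^2$ edges of $H_1$ partition into $n^2$ edge-disjoint \emph{shortest} $s$--$t$ paths of length $2n+1$. Hence, recursively, each copy of $H_j$ has $n^{2j}$ edge-disjoint shortest terminal paths that together cover \emph{all} of its base edges. This yields, at every scale $j$, $n^{2j}$ edge-disjoint copies of $G_{n,(2n+1)^j}$ built out of actual base edges, so Lemma~\ref{lem:minor} produces base edges with distortion at least $(2n+1)^j/6-1/2$ directly; a coloring/geometric-sum argument then gives average distortion $\Omega(i)$. (The paper's introduction even flags that an earlier version erred by using a block whose disjoint $s$--$t$ paths were not all shortest; your block has the opposite defect, a unique shortest path that misses almost all edges.) To repair your plan you would need a two-terminal gadget whose edges partition into many edge-disjoint shortest terminal paths, which is exactly what the paper's $H_1$ provides.
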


\begin{proof}

We proceed by constructing an infinite sequence ${H_i}$ of
graphs of treewidth-$(n+1)$ and show that the minimum distortion with which they can
be embedded into a distribution over graph metrics in $F$ grows
without bound as $i$ increases.

First we construct $H_1$: Construct the graph $H_1$ by taking $K_{n,n}$ and
attaching each vertex $l$ in the ``left half'' of $K_{n,n}$ to source $s$ with
$n$ disjoint paths of length $n$ and attaching each vertex $r$ in the ``right
half'' of $K_{n,n}$ to sink $t$ with $n$ paths of length $n$. Call $s$ and $t$
the terminals of $H_1$.

The graph $H_1$ has $m = 2n^3 +n^2$ edges. We now show that this graph
contains exactly $n^2$ edge-disjoint $s,t$ paths of length $2n+1$.  Label the
$s$ to $l_i$ paths $sp_{i,1}$ to $sp_{i,n}$ and the $r_i$ to $t$ paths
$tp_{i,1}$ to $tp_{i,n}$.  Observe that the paths formed as follows are edge
disjoint:

\bigskip path $sp_{i,j}$ followed by edge $(l_i,r_j)$ followed by path
$tp_{j,i}$

\bigskip
\noindent

The $H_i$ are constructed recursively: For each $i$, construct the graph $H_i$
by replacing every edge in $H_1$ with a copy of $H_{i-1}$.  Therefore, $H_i$
has $m^i = (2n^3 + n^2)^i$ edges and the two vertices at the end of any edge
in the original $H_1$ are connected by $n^{2(i-1)}$ edge-disjoint paths of
length $(2n+1)^{i-1}$.  Also note that $H_i$ has treewidth $\leq n+1$ (for
completeness, we provide a proof in lemma~\ref{lem:width}).

As in \cite{GuptaNRS99} we use Yao's MiniMax Principle to prove the lower
bound.  We show that there is a distribution $d$ over the edges of $H_i$ such
that for any embedding of $H_i$ into a graph which excludes minor $M$, an edge
chosen randomly from distribution $d$ has an expected distortion of
$\Omega(i)$. Then by Yao's MiniMax principle, for any embedding of $H_i$ into
a distribution over graphs in $F$, there must be an edge with expected
distortion $\Omega(i)$. We shall assume a uniform distribution over the edges.

Let $U$ be a graph which excludes minor $M$ and let $\Phi:H_i \rightarrow U$
be an embedding of $H_i$ into $U$ such that distances in $U$ dominate their
corresponding distances in $H_i$. For each edge $e \in H_i$ we will give $e$
the color $j, 1 \leq j \leq i-1$ if $\Phi$ distorts $e$ by at least
$\frac{1}{6}(2n+1)^j - 1/2$. Note that an edge may have many colors.

Consider the copies of $G_{n,(2n+1)^{i-1}}$ in $H_i$. $H_i$ contains a copy of
$K_{n,n}$ in which every edge has been replaced with a copy of $H_{i-1}$.
Each copy of $H_{i-1}$ has $n^{2(i-1)}$ edge disjoint paths of length
$(2n+1)^{i-1}$. Thus, $H_i$ clearly contains at least $n^{2(i-1)}$ edge
disjoint copies of $G_{n,(2n+1)^{i-1}}$.

$U$ does not contain $M$ as a minor. But since $|V_M| = n$, $K_n$ contains $M$
as a minor. Also, $K_{n,n}$ contains $K_n$ and hence $M$ as a minor, which in
turn implies that $U$ does not contain $K_{n,n}$ as a minor. Thus, by
lemma~\ref{lem:minor}, at least one edge in each copy of $G_{n,(2n+1)^{i-1}}$
has distortion $\geq \frac {1}{6}(2n+1)^{i-1} - 1/2$ and hence has color
$i-1$. Since $H_i$ comprises $m$ copies of $H_{i-1}$, it contains $m
(n^2)^{i-2}$ copies of $G_{n,(2n+1)^{i-2}}$. Therefore, there are at least $m
(n^2)^{i-2}$ edges with color $i-2$. In general, there are at least $\geq
m^{i-j-1} \cdot n^{2j}$ edges with color $j$.

The distortion of an edge is $\geq \frac {1}{6}(2n+1)^{j} - 1/2$ where $j$ is
the largest of the edge's colors. For each edge $e \in E_{H_i}$ let $C_e$ be
the set of colors which apply to edge $e$. Clearly, $\forall e \in E_{H_i}$,

\begin{eqnarray*}
  \underset{j \in C_e}{\sum} (\frac{1}{6}(2n+1)^{j} - 1/2) &\leq& 2 \cdot
  \underset{j \in C_e}{\max} (\frac{1}{6}(2n+1)^{j} - 1/2)
\end{eqnarray*}

Thus,

\begin{eqnarray*}
\underset{e \in E_{H_i}}{\sum}\underset{j \in C_e}{\max} (\frac {1}{6}
(2n+1)^{j} - 1/2)
&\geq&
\frac{1}{2}\underset{e \in E_{H_i}}{\sum} \underset{j \in C_e}{\sum} (\frac{1}{6}(2n+1)^{j} - 1/2)\\
&\ge& \frac{1}{18}\left(\overset{i-1}{\underset{j=1}{\sum}} |\{e|j \in C_e\}|\cdot
(2n+1)^{j}\right) - m^i/4\\
&& \ \ \ \ \ \ \ \ \ \ \mbox{[Separating out the -1/2 term for $j=1$]}\\
&\geq& \frac{1}{18}\left(\overset{i-1}{\underset{j=1}{\sum}} m^{i-j-1} \cdot n^{2j}
\cdot (2n+1)^{j}\right) -m^i/4 \\
&=& \frac{1}{18}\left(\overset{i-1}{\underset{j=1}{\sum}} m^{i-1}\right) - m^i/4 \\
&=& \frac{1}{18} \left((i-1)\cdot m^{i-1}\right) - m^i/4\\
&=& (m^{i}/4)\left(\frac{i-1}{4.5 m} - 1\right)\\
&=& m^i\Omega(i/m)
\end{eqnarray*}

Then, since $H_i$ has $m^i$ edges, there must be at least one edge with
distortion $\Omega(i/m) = \Omega(\log|H_i|)$, ignoring polynomial factors of
$n$.

\end{proof}

\begin{lemma}
  \label{lem:width} The graph $H_i$ has treewidth at most $n+1$.
\end{lemma}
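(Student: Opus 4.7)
The plan is to prove by induction on $i$ the slightly strengthened statement that $H_i$ admits a tree decomposition of width at most $n+1$ in which the two terminals $s$ and $t$ appear together in a single bag. The strengthening is forced on us by the inductive step: to glue copies of $H_{i-1}$ into the $H_1$ decomposition along edges of $H_1$, the terminals of each glued copy must sit in a common bag of that copy's own decomposition so that the attachment can be made cleanly.

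For the base case I would exhibit a decomposition of $H_1$ explicitly. Let $L = \{l_1, \ldots, l_n\}$. Take a root bag $A = \{s,t\} \cup L$ of size $n+2$, together with $n$ children $B_j = \{r_j, t\} \cup L$ of size $n+2$, one per $j$. Each bipartite edge $(l_i, r_j)$ lives in $B_j$. For each length-$n$ path $s = u_0, u_1, \ldots, u_n = l_i$ joining $s$ to some $l_i$, I would append to $A$ a chain of bags $\{s, u_1, l_i\}, \{u_1, u_2, l_i\}, \ldots, \{u_{n-2}, u_{n-1}, l_i\}, \{u_{n-1}, l_i\}$; dragging $l_i$ along each bag keeps the $l_i$-subtree connected (since the same $l_i$ lives in $A$ and in every $B_j$), and the chain covers all of the path's edges. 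The $t$-to-$r_j$ paths are handled symmetrically, appending chains to $B_j$ that carry $r_j$. Every bag has size at most $n+2$, and $s, t \in A$, so the invariant holds.

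For the inductive step I would abstract the gluing into a separate lemma: if $G$ has a tree decomposition of width $\leq w$, and $K$ has a tree decomposition of width $\leq w$ in which two designated terminals $s_K, t_K$ share a bag, then replacing every edge $(u,v)$ of $G$ with a fresh copy of $K$ (identifying $s_K$ with $u$ and $t_K$ with $v$) produces a graph of treewidth $\leq w$. The construction picks for each edge $(u,v)$ a bag $B_{uv}$ of $G$'s decomposition containing both $u$ and $v$, takes a fresh copy of $K$'s tree, and attaches it at $B_{uv}$ via the bag containing $\{s_K, t_K\}$. All three tree-decomposition axioms follow routinely; the only substantive check is that for each $G$-vertex $u$ the merged subtree stays connected, which holds because every attachment point involving $u$ already lies inside the (connected) $u$-subtree of $G$'s decomposition, and in each glued copy of $K$ the terminal's own subtree passes through the common bag used for attachment.

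Applying this lemma with $G = H_1$, $K = H_{i-1}$, and $w = n+1$, and combining with the inductive hypothesis, gives a tree decomposition of $H_i$ of width at most $n+1$. The outer terminals $s, t$ remain in the bag $A$ inherited unchanged from $H_1$'s decomposition, so the strengthened invariant is preserved. The main obstacle I anticipate is purely bookkeeping: the vertex $s$ is incident to $n^2$ edges of $H_1$ and therefore becomes a terminal in $n^2$ glued copies of $H_{i-1}$ inside $H_i$, so one has to be careful when verifying that the $s$-subtree in the final decomposition (the $H_1$ part plus all $n^2$ glued $K$-subtrees) remains a single connected subtree.
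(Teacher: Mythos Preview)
Your proposal is correct and essentially identical to the paper's own proof: the same strengthened induction hypothesis (both terminals in a common bag), the same explicit decomposition of $H_1$ (a central bag $\{s,t\}\cup L$, one bag $\{r_j,t\}\cup L$ per right vertex, and width-$2$ chains for the $s$--$l_i$ and $r_j$--$t$ paths carrying along the appropriate endpoint), and the same edge-by-edge gluing for the inductive step. The only cosmetic difference is that you package the inductive gluing as a standalone lemma, which is if anything a cleaner presentation.
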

\begin{proof}
  For $i>1$ will use the terms source and sink vertices of $H_i$ to refer to
  the source and sink of the copy of $H_1$ in which each edge was replaced by
  a copy of $H_{i-1}$. We will also assume that $n \ge 1$ since the case $n=0$
  is not well defined.  We will prove the lemma by induction. In fact we will
  prove a slightly stronger result: that the graph $H_i$ has a decomposition
  of treewidth at most $n+1$ such that one of the supernodes in the tree
  decomposition has both the source and the terminal vertex of $H_i$.
  
  As a base case, consider the following decomposition of $H_1$. Let $l_1,
  l_2, \ldots, l_n$ refer to the ``left'' vertices in the copy of $K_{n,n}$
  within $H_1$, i.e., vertices in $K_{n,n}$ which are connected to the source
  $s$. Similarly, let $r_1, r_2, \ldots, r_n$ refer to the ``right'' vertices,
  i.e., vertices in $K_{n,n}$ which are connected to the sink $t$. Consider a
  decomposition which has a central supernode containing $s,t$ and all the
  left vertices. This supernode is connected via separate edges to supernodes
  $R_j$, $j=1, 2, \ldots, n$, where $R_j$ consists of all the left vertices,
  the sink $t$, and the right vertex $r_j$. Notice that each of the supernodes
  described so far has exactly $n+2$ nodes. Notice also that $s,t$ appear
  together in the central supernode. All the edges in $K_{n,n}$ are already
  ``covered'' i.e. there is a supernode which contains both endpoints. The
  edges on the paths from the source to the left vertices and the right
  vertices to the sinks are not covered. But observe that for each left vertex
  $l_j$, there is a supernode (the central supernode) which contains both
  $l_j$ and $s$. Also, for each right vertex $r_j$, there is a supernode
  (specifically, $R_j$) which contains both $r_j$ and $t$. Consider any path
  $P$ from source $s$ to node $l_j$. Obtain a tree decomposition of this path
  (with treewidth 1), and for each supernode in this decomposition, add node
  $l_j$ to the supernode, if not already present; this new decomposition now
  has treewidth 2. We call this the augmented decomposition for $P$. This
  augmented decomposition must have at least one supernode which contains both
  $l_j$ and $s$; connect an arbitrary supernode which contains both $s$ and
  $l_j$ in the augmented decomposition to the central supernode. All edges on
  the path $P$ are now covered. Repeat this process for all paths from $s$ to
  the left vertices. Repeat the same process for every path from the right
  vertices to $t$ except that an arbitrarily chosen supernode with both $t$
  and $r_j$ in the augmented decomposition for the path is connected to
  supernode $R_j$ rather than to the central supernode. The resulting tree
  decomposition has treewidth $\max\{2,n+1\} = n+1$. This completes the base
  case for the induction. It is easy to see that the resulting decomposition
  satisfies all the properties required for a tree-width of $n+1$.

  For the induction step, assume the hypothesis is true for $i \le k-1$. The
  induction step will mimic the recursive construction of $H_k$. Consider
  $i=k$. Consider the tree decomposition $D_1$ of $H_1$ as described
  above. Then, for each edge $e$ of $H_1$, do the following.
  \begin{enumerate}
  \item First, take a fresh copy of $H_{k-1}$, with source $s_e$ and
    destination $t_e$. Identify $s_e$ and $t_e$ with the endpoints of $e$.
  \item Take a tree decomposition of $H_{k-1}$ as guaranteed above; call this
    $D_{k-1,e}$. This copy must have a supernode, say $A$, with both the
    source $s_e$ and the destination $t_e$ of $H_{k-1}$.
  \item There must be a supernode, say $B$, in $D_1$ which contains both
    endpoints of $e$. Join $A$ and $B$ with an edge.
  \end{enumerate}
  It is easy to see that the resulting decomposition satisfies all the
  properties required for a tree-width of $n+1$. Also, the decomposition $D_1$
  has a supernode with both $s$ and $t$, which completes the proof of this
  lemma.
\end{proof}

\section{Planar Graphs}

\subsection{Outline}


First we show that given a $2$-dimensional grid, there is a distribution over
the edges such that any embedding into a
treewidth-$k$ graph metric has high expected distortion. The proof builds on
the work of Alon {\em et al.} \cite{AlonKPW95}. By Yao's MiniMax principle,
this is enough to show that the $2$-dimensional grid can not be embedded into
a distribution over such graphs with distortion less than $\Omega(\log n)$.

\subsection{Results}

In this section we will use $GRID_n$ to denote the planar graph
consisting of the $n \times n$ grid.

\begin{lemma}\label{lem:rows}(From~\cite {AlonKPW95})
If $A$ is a set of $\beta^2$ vertices in $GRID_n$, where $\beta^2 \leq \frac {n^2}{2}$,
then there are at least $\beta$ rows that $A$ intersects but does not fill or at least 
$\beta$ columns that $A$ intersects but does not fill.
\end{lemma}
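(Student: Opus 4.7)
The plan is to argue by contradiction. Suppose that $A$ intersects but does not fill at most $\beta-1$ rows and at most $\beta-1$ columns. Let $r,c$ denote the number of rows and columns that $A$ partially fills, and let $f,g$ denote the number of rows and columns that $A$ completely fills. By hypothesis, $r,c\le \beta-1$.

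I would split into two cases according to whether $A$ contains a full line. If $f=g=0$, then $A$ is contained in the rectangle formed by the rows and columns it intersects, so $|A|\le rc\le(\beta-1)^2<\beta^2$, a contradiction. If $f=0$ but $g\ge 1$, the full column places a vertex of $A$ in every row, so all $n$ rows are intersected, all of them partially; this forces $r=n$, which contradicts $r\le\beta-1<n$ since $\beta\le n/\sqrt{2}$. The symmetric subcase $g=0$, $f\ge 1$ is identical.

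The substantive case is $f,g\ge 1$. A full row meets every column, so every column is intersected by $A$; combined with $c\le\beta-1$, this yields $g\ge n-c\ge n-\beta+1$, and symmetrically $f\ge n-\beta+1$. Now apply inclusion-exclusion to the vertices lying in any full row or any full column; each such vertex belongs to $A$, and there are $fn$ vertices in full rows, $gn$ vertices in full columns, and $fg$ counted twice, so
\[
|A|\;\ge\; fn+gn-fg.
\]
The right-hand side is non-decreasing in each of $f,g$ on $[1,n]$ (its partial derivatives are $n-g\ge 0$ and $n-f\ge 0$), so the minimum over the feasible region occurs at $f=g=n-\beta+1$, giving
\[
|A|\;\ge\;(n-\beta+1)(n+\beta-1)\;=\;n^2-(\beta-1)^2.
\]
Together with $|A|=\beta^2$ this forces $2\beta^2-2\beta+1\ge n^2$, contradicting $2\beta^2\le n^2$ for every $\beta\ge 1$.

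The main point to identify is the inclusion-exclusion lower bound in the case $f,g\ge 1$: the scarcity of partial columns, propagated through a single full row, forces almost every column to be full, and symmetrically almost every row, so $A$ already contains nearly all vertices of an $(n-\beta+1)\times(n-\beta+1)$ sub-grid of intersecting full lines. The numerics are essentially tight, and the hypothesis $\beta^2\le n^2/2$ is exactly what is needed to close this case.
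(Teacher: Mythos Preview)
The paper does not actually supply a proof of this lemma; it is quoted verbatim from Alon, Karp, Peleg, and West~\cite{AlonKPW95} and used as a black box, so there is no in-paper argument to compare against. Your contradiction argument is correct and self-contained: the case split on whether $A$ contains a full row and/or a full column is clean, and the key step---that a single full row forces all columns to be intersected, hence at least $n-\beta+1$ of them full, and then inclusion--exclusion gives $|A|\ge n^2-(\beta-1)^2$---is exactly where the hypothesis $\beta^2\le n^2/2$ is consumed. One cosmetic point: you write $r,c\le\beta-1$, which implicitly treats $\beta$ as an integer; the negation of ``at least $\beta$'' is strictly ``$<\beta$'', but since $r,c$ are integers and the inequalities you need ($rc<\beta^2$, $g\ge n-c>n-\beta$) only require $r,c<\beta$, nothing breaks.
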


\begin{lemma}\label{lem:vertices}(Modified from~\cite {AlonKPW95})
If $A$ is a set of $\beta^2$ vertices in $GRID_n$, where $\beta^2 \leq \frac {n^2}{2}$,
and $B$ is a set of at most $\beta / 4$ vertices in $A$, then there are at least $\beta/2$
vertices in $A$ that have neighbors outside $A$ and have distance at least $\frac {\beta}{4|B|}$
from each vertex of $B$.
\end{lemma}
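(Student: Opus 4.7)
My plan is to first use Lemma~\ref{lem:rows} to produce a set $S$ of $\beta$ candidate vertices in $A$ that all have a neighbor outside $A$ and that are spread across distinct rows (or distinct columns), and then to show by a direct counting argument that at most half of these candidates can lie too close to $B$, leaving the desired $\beta/2$ survivors.

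First I would apply Lemma~\ref{lem:rows} to $A$. Without loss of generality, assume it outputs $\beta$ rows of $GRID_n$ that $A$ intersects but does not fill (the column case is symmetric). In each such row, walking along the row one must encounter an edge joining some vertex in $A$ to an adjacent vertex outside $A$, because the row contains both $A$-vertices and non-$A$-vertices. Pick one such $A$-endpoint per row to form a set $S$ with $|S|=\beta$. By construction each element of $S$ has a grid-neighbor outside $A$, and $S$ contains exactly one vertex in each of the chosen $\beta$ rows.

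Next I would bound how many vertices of $S$ can lie within $\ell_1$-distance less than $d := \beta/(4|B|)$ of some vertex of $B$. For any fixed $b=(r_b,c_b)\in B$, a vertex $(r,c)$ with $|r-r_b|+|c-c_b|<d$ must satisfy $|r-r_b|<d$, so its row lies in an integer interval of length less than $2d$ around $r_b$, hence in at most $2d$ rows. Since $S$ contains at most one vertex per row, at most $2d$ vertices of $S$ can be within distance less than $d$ of $b$. Summing over $b\in B$, the total number of excluded candidates is at most $|B|\cdot 2d = 2|B|\cdot \beta/(4|B|) = \beta/2$, so at least $|S|-\beta/2 = \beta/2$ vertices of $S$ are at distance at least $d$ from every vertex of $B$ and also have a neighbor outside $A$, as required.

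The main subtlety I foresee is the integer rounding of $d$, since $d$ need not be an integer. However, the hypothesis $|B|\le\beta/4$ forces $d\ge 1$, so the crude bound of $2d$ on the number of relevant rows is legitimate and tight enough to conclude; in particular no case analysis on the parity or integrality of $\beta$, $|B|$, or $d$ is needed to obtain the stated $\beta/2$ bound.
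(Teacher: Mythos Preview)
Your argument is correct and matches the paper's own proof essentially line for line: both invoke Lemma~\ref{lem:rows} to produce a set of $\beta$ boundary vertices of $A$ lying in distinct rows (or columns), then observe that each $b\in B$ can be within distance less than $d=\beta/(4|B|)$ of at most $2d=\beta/(2|B|)$ of these vertices, leaving at least $\beta-\vert B\vert\cdot\tfrac{\beta}{2|B|}=\beta/2$ survivors. Your write-up is in fact slightly more explicit than the paper's, which compresses the whole thing into three sentences.
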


\begin{proof}
By Lemma~\ref{lem:rows}, there is a set $C$ of at least $\beta$ vertices in $A$ which
are in distinct rows or distinct columns and have neighbors outside of $A$. Since
they are in distinct rows, a vertex of $B$ can be at distance $< \frac {\beta}{4|B|}$
of at most $\frac{\beta}{2|B|}$ vertices in $C$. Thus, there are at least 
$\frac{\beta}{2}$ vertices at distance at least $\frac {\beta}{4|B|}$ from each
vertex in $B$.
\end{proof}

\begin{lemma}\label{lem:dist}
Let $H$ be a graph of treewidth $k$, $\Phi:GRID_n \rightarrow H$ be an embedding of $GRID_n$
into $H$, and $\beta \leq n/4$. Then there are at least $\frac {n^2}{24\beta}$ edges $(u,v)$
such that $d_H(u,v)>\frac{\beta}{16(k+1)}$
\end{lemma}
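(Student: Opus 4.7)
The plan is to use a nice tree decomposition of $H$ (Proposition~\ref{prop:nice}) to iteratively carve $GRID_n$ into disjoint pieces $A_1,\ldots,A_m$ of size $\Theta(\beta^2)$, each separated from its complement in $H$ by a single supernode of size at most $k+1$; applying Lemma~\ref{lem:vertices} to each piece will then produce $\Omega(\beta)$ grid-edges whose $H$-distance exceeds $\beta/(16(k+1))$. I assume $\Phi$ is non-contracting (and hence injective). I also assume $\beta \geq 4(k+1)$, since otherwise $\beta/(16(k+1)) < 1/4$, and all $\Omega(n^2) \gg n^2/(24\beta)$ grid edges trivially qualify (each has $H$-distance $\geq 1$).

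Root the nice decomposition and assign each grid vertex $v$ to the topmost supernode $X_{r(v)}$ containing $\Phi(v)$; let $g(i)$ denote the number of grid vertices assigned to the subtree of $T$ rooted at $i$. Each extraction iteration performs a greedy top-down descent: while $g(i) > 2\beta^2$, descend into the child of $i$ with larger $g$. Injectivity forces $a(i) \leq |X_i| \leq k+1 \leq \beta/4$, so the larger child of any node with $g(i) > 2\beta^2$ has $g \geq (g(i) - a(i))/2 \geq \beta^2 - \beta/8 \geq \beta^2/2$; thus the descent halts at some $i_c$ with $g(i_c) \in [\beta^2/2,\, 2\beta^2]$. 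I extract the subtree rooted at $i_c$ as piece $A_c$, subtract its vertices from all $g$-values, and iterate while the residual $g(\text{root}) > 2\beta^2$. This yields $m \geq n^2/(4\beta^2)$ disjoint pieces; Proposition~\ref{prop:path} guarantees that $X_{i_c}$ separates $A_c$ from its complement in $H$, so $d_H(\Phi(u),\Phi(v)) \geq d_H(\Phi(u), X_{i_c})$ for every grid edge $(u,v)$ with $u \in A_c$, $v \notin A_c$.

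For each piece, I pick for every $h \in X_{i_c}$ a representative $b_h \in GRID_n$ whose image is globally closest in $H$ to $h$, and set $B_c = \{b_h\}$ with $|B_c| \leq k+1 \leq \beta/4$. The proof of Lemma~\ref{lem:vertices} only uses $|B|$ and the distinct-row/column property of the boundary set, not the hypothesis $B \subseteq A$, so it extends verbatim and supplies $\Omega(\beta)$ vertices $u \in A_c$ with a grid-neighbor $v \notin A_c$ and grid-distance $\geq \beta/(c(k+1))$ from every $b \in B_c$ (for a small absolute constant $c$ absorbing the $\sqrt{2}$ factor from $|A_c| \in [\beta^2/2, 2\beta^2]$). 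For such a witness $u$, letting $h^*$ minimize $d_H(\Phi(u), h^*)$ over $h \in X_{i_c}$, the global optimality of $b_{h^*}$ gives $d_H(\Phi(b_{h^*}), h^*) \leq d_H(\Phi(u), h^*)$; the triangle inequality then yields $d_H(\Phi(u), \Phi(b_{h^*})) \leq 2\,d_H(\Phi(u), h^*)$, while non-contraction gives $d_H(\Phi(u),\Phi(b_{h^*})) \geq d_{GRID_n}(u,b_{h^*}) \geq \beta/(c(k+1))$. Combining, $d_H(\Phi(u),\Phi(v)) \geq d_H(\Phi(u), h^*) \geq \beta/(2c(k+1))$, which exceeds $\beta/(16(k+1))$ by the choice of $c$. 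Summing over the $m$ pieces and charging each grid edge at most twice (once per endpoint) yields the required $n^2/(24\beta)$ stretched edges.

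The main technical obstacle is ensuring the greedy descent does not over-shoot to a subtree with $g(i_c) < \beta^2/2$; injectivity of $\Phi$ (which bounds $a(i) \leq k+1$) is what makes the larger-child estimate work. A secondary subtlety is that some $h \in X_{i_c}$ may have no grid preimage; defining $b_h$ as the globally closest image rather than as a preimage finesses this cleanly, at the cost of only the factor of $2$ visible in the final denominator.
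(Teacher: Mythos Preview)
Your separator claim is wrong. You assert that $X_{i_c}$ alone separates $A_c$ from its complement in $H$, but this fails whenever an earlier extraction root $i_{c'}$ lies strictly inside the subtree of $i_c$ --- and your greedy descent does not prevent this. Concretely, take $T$ to be a path: the first extraction stops at some deep node $i_1$, and since the unique child at each step carries all the remaining mass, every subsequent $i_c$ is a strict ancestor of $i_{c-1}$. Then $A_c$ consists of the grid vertices with $r(v)$ strictly between $i_c$ and $i_{c-1}$, and a grid edge $(u,v)$ with $u\in A_c$, $v\in A_{c-1}$ has both $r(u)$ and $r(v)$ below $i_c$; the $T$-path between them never visits $i_c$, so Proposition~\ref{prop:path} gives no control via $X_{i_c}$. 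The honest separator for $A_c$ is $X_{i_c}$ together with $X_{i_{c'}}$ for every maximal earlier root below $i_c$, and nothing in your argument bounds how many of those there are.

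The paper handles exactly this issue by a different decomposition: rather than peeling off rooted subtrees one at a time, it deletes $\lceil n^2/(3\beta^2)\rceil-1$ edges of $T$ (always splitting the current largest component as evenly as the nice-decomposition structure allows), so that each piece is a connected subtree whose interface with the rest of $H$ is precisely the supernodes at its incident deleted edges. Since each deleted edge is incident to two pieces, the average piece touches fewer than two deleted edges, so at least half the pieces touch at most four; these pieces have genuine separators of size at most $4(k+1)$, which is where the constant $16$ in the statement comes from. Your scheme can be repaired by the same averaging trick (your extractions implicitly cut the $m$ parent-edges of the $i_c$'s, and the resulting connected components of $T$ are exactly your $A_c$'s), but as written the claim that a single supernode separates each piece is false.
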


\begin{proof}
Since $H$ has treewidth-$k$, it must have a tree decomposition of width $k$. 
Moreover, it must have a {\em nice} tree decomposition $({X_i},T)$ of width $k$ 
by Proposition~\ref{prop:nice}. 

Given a subtree $S$ of $T$, define the $\mbox{\sc{HSize}}(S)=|\underset{i \in V_S}{\bigcup} X_i|$. 
Since $({X_i},T)$ is a nice decomposition, we know that $T$ has a maximum degree
of $3$. We also know that for any two adjacent vertices $i,j \in T$, $|X_i-X_j| \leq 1$.
Thus, every subtree $S$ of $T$ has an edge $e$ such that 
removing $e$ creates $2$ subtrees each with $\mbox{\sc{HSize}}$ at least $1/3 \cdot \mbox{\sc{HSize}}(S)$.
Note that if $S_1$ and $S_2$ are the two subtrees of $T$, $\underset{i \in V_{S_1}}{\bigcup} X_i$
and $\underset{i \in V_{S_2}}{\bigcup} X_i$ are not disjoint.

Start with $T$ and successively delete edges from the remaining component with
the largest $\mbox{\sc{HSize}}$ such that the $\mbox{\sc{HSize}}$s of the resulting subtrees are as
evenly divided as possible. Do this until $\lceil \frac{n^2}{3\beta^2} \rceil -1$ edges
have been deleted and there are $\lceil \frac{n^2}{3\beta^2} \rceil$ pieces. The
smallest piece will always be at least $1/3$ the $\mbox{\sc{HSize}}$ of the previous largest
piece. Therefore, on the average these pieces have $\mbox{\sc{HSize}}=3 \beta^2$ and the
smallest will have $\mbox{\sc{HSize}} \geq \beta^2$.

Since each deleted edge of $T$ is incident to $2$ pieces, the average number of
pieces incident with a piece is less than $2$. Thus, at least half the pieces are
incident with no more than $4$ edges.

Each deleted edge in $T$ represents a set of points which form a vertex cut of 
$H$ of size $\leq k+1$. Thus, there are $\frac{n^2}{6\beta^2}$ pieces of $\mbox{\sc{HSize}}$
$\geq \beta^2$ which are separated from the rest of $H$ by a cut of size
$\leq 4(k+1)$. Let $A$ be a piece of $\mbox{\sc{HSize}} \geq \beta^2$ and let $B$ be the subset of
size $\leq 4(k+1)$ separating $A$ from the rest of $H$. Then by Lemma~\ref{lem:vertices}, $A$ has
at least $\beta / 2$ vertices with neighbors outside of the piece whose distance
from the vertices of $B$ is at least $\frac{\beta}{16(k+1)}$.
Thus, there are at least $\frac{n^2}{24\beta}$ edges which are each distorted
by a factor of $\frac{\beta}{16(k+1)}$.
\end{proof}

\begin{theorem}
\label{thm:grid}
Any $\alpha-$approximation of the metric of $GRID_n$ by a distribution over
dominating treewidth $k$ graphs has $\alpha=\Omega(\log n)$.
\end{theorem}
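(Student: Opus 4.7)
The plan is to invoke Yao's MiniMax principle with the uniform distribution $d$ over the edges of $GRID_n$. This reduces the theorem to the following claim: for every treewidth-$k$ graph $H$ that dominates $GRID_n$ and every embedding $\Phi:GRID_n\to H$, a $d$-random edge has expected distortion $\Omega(\log n/k)$, which is $\Omega(\log n)$ for fixed $k$. Once the claim is established, Yao's principle yields the theorem because for any distribution over treewidth-$k$ dominating graphs, averaging over the support preserves the per-$H$ lower bound and hence forces some edge to have large expected distortion.

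To prove the claim I will apply Lemma~\ref{lem:dist} at a geometric sequence of scales. For each integer $j$ with $1\le j\le \lfloor\log_2(n/4)\rfloor$, I set $\beta_j=2^j$; the lemma then supplies a set $E_j$ of at least $n^2/(24\cdot 2^j)$ grid edges whose endpoints have $d_H$-distance strictly greater than $2^j/(16(k+1))$. I will color each edge in $E_j$ with color $j$, allowing multiple colors per edge, exactly as in the coloring argument used in the proof of Theorem~\ref{thm:minor}.

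Since distances between adjacent vertices in $GRID_n$ equal $1$, the distortion of an edge $e$ is at least $\max_{j\in C_e} 2^j/(16(k+1))$; because the weights $2^j$ form a geometric progression, this maximum is at least half the sum $\sum_{j\in C_e} 2^j/(16(k+1))$. Summing this inequality over all edges and exchanging the order of summation, each scale $j$ contributes at least $|E_j|\cdot 2^j/(32(k+1)) = \Theta(n^2/k)$ to the total; summing over the $\Theta(\log n)$ scales gives a total distortion of $\Omega(n^2\log n/k)$. Dividing by the $O(n^2)$ edges of the grid produces the required $\Omega(\log n/k)$ average distortion for the uniform $d$.

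The main obstacle is the telescoping step itself: no single application of Lemma~\ref{lem:dist} yields more than a polynomial factor, so the logarithmic lower bound has to come from aggregating contributions across $\Theta(\log n)$ different scales of $\beta$. The geometric-spacing trick (bounding each edge's total contribution across colors by twice its worst-color contribution) interacts nicely with the $n^2/(24\beta)$ shape of Lemma~\ref{lem:dist} so that the per-scale contribution becomes independent of $j$, which is exactly what makes the final sum grow like $\log n$; getting the counting right without double-counting edges across scales is the one place where care is needed.
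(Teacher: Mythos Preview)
Your argument is correct and follows the same overall strategy as the paper: reduce via Yao's principle to a single dominating treewidth-$k$ host $H$, then invoke Lemma~\ref{lem:dist} at many scales of $\beta$ and sum the contributions to force $\Omega(\log n)$ average edge distortion. The only difference is in how the summation is packaged. You take dyadic scales $\beta_j=2^j$ and use the ``largest color dominates the geometric sum'' trick borrowed from the proof of Theorem~\ref{thm:minor}; the paper instead writes $\expect[X]=\sum_{i\ge 1}\prob[X\ge i]$ and, for each integer threshold $i\le n/(64(k+1))$, plugs $\beta=16(k+1)i$ into Lemma~\ref{lem:dist} to lower-bound $\prob[X\ge i]$ by a $\Theta(1/(k i))$ term, yielding a harmonic sum directly. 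The tail-sum formulation is a bit slicker since it avoids the coloring bookkeeping entirely, but your dyadic version is equally valid and gives the same $\Omega((\log n)/k)$ bound.
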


\begin{proof}
Let H be an arbitrary graph of treewidth $k$ whose metric dominates that of
$GRID_n$. By Lemma~\ref{lem:dist}, there are at least $\frac {n^2}{24\beta}$ edges 
which are distorted by $>\beta/16(k+1)$ for any $\beta \leq \frac{n}{4}$. 
Let $X$ be the distortion of an edge chosen uniformly at random from
$GRID_n$. $X$ can only take on non-negative integer values, so the
expected distortion is
$E[X]=\underset{i \geq 1}{\sum}Prob(X \geq i)$. For $i \leq \frac{n}{64(k+1)}$,
let $\beta = 16(k+1) i$. Then,

\begin{eqnarray*}
E[X] &=& \underset{i \geq 1}{\sum}Prob(X \geq i)\\
&>& \underset{i \geq 1}{\overset {\lfloor n/64(k+1) \rfloor}{\sum}} \frac {n^2}{24 \cdot 16(k+1)i \cdot 2n(n-1)}\\
&>& \underset{i \geq 1}{\overset {\lfloor n/64(k+1) \rfloor}{\sum}} \frac {1}{2 \cdot 24 \cdot 16(k+1)i} \\
&=& \frac {1}{768(k+1)} \underset{i \geq 1}{\overset {\lfloor n/64(k+1) \rfloor}{\sum}} \frac {1}{i}\\
&=& \Omega (\log n)
\end{eqnarray*}

Since $H$ was arbitrarily chosen, then by Yao's MiniMax principle if $GRID_n$
is embedded into a distribution over treewidth-$k$ graphs, there must be an
edge with expected distortion of $\Omega (\log n)$.
\end{proof}

\section{Conclusions}

It is interesting to note that
the inability of minor closed families to approximate all graphs well
supports the conjecture~\cite{GuptaNRS99}~\cite{ChekuriGNRS03} that
minor closed families of graphs (and therefore distributions over such
families) can be embedded into $\ell_1$ with constant distortion. Since Linial
{\em et al.}~\cite{LinialLR95} showed a lower bound of $\Omega(\log n)$ for
embedding arbitrary graph metrics into $\ell_1$, the conjecture further
implies that there must be families of graphs which cannot be embedded into
distributions over excluded minor graphs with distortion less than
$\Omega(\log n)$.

However, the particular inability of minor closed families 
to approximate other minor closed families also eliminates one
potential approach to embedding these families into $\ell_1$: 
Gupta {\em et al.}~\cite{GuptaNRS99} showed that although treewidth-$2$
graphs and treewidth-$1$
graphs are both embeddable into $\ell_1$ with constant distortion,
treewidth-$2$ graphs are not embeddable into distributions over
treewidth-$1$ graphs with constant distortion. We have shown that a
similar structure holds for all higher treewidths.  Thus, an
approach which attempts to repeatedly embed bounded treewidth graphs
into (distributions over) graphs with lower treewidth will not work.

\bigskip
\noindent
\textbf{Acknowledgements:} The authors would like to thank an anonymous
reviewer for pointing out the aforementioned folklore result. We would also
like to thank Elias Koutsoupias for valuable discussions, and Adam Meyerson
and Shailesh Vaya for helpful comments on previous drafts. The error mentioned
in the introduction was pointed out to us by Alex Jaffe and his advisor James
Lee from the University of Washington. Along with their coauthors, they also
discovered a fix~\cite{cjlv:minors08}. We discovered the corrected proof after
we were informed of the bug by Alex and James but without seeing their paper
or knowing any particulars of their proof. We would like to thank the authors
of~\cite{cjlv:minors08} for informing us of the bug and giving us an
opportunity to correct it and publish this corrected version before their
result got published.

\end{document}